\newcommand{\ot}{\otimes}
\newcommand{\comment}[1]{}
\newcommand{\ket}[1]{|#1\rangle}
\newcommand{\ketbra}[2]{|#1\rangle\!\langle#2|}
\newcommand*{\tr}{\mathrm{tr}}
\newcommand*{\cE}{\mathcal{E}}
\newcommand*{\cH}{\mathcal{H}}
\newcommand*{\bcE}{\mathcal{\bar{E}}}
\newcommand{\aSV}{SV}
\newcommand{\aF}{FR}
\newcommand{\aQ}{QM}
\newcommand{\aQa}{QMa}
\newcommand{\aQb}{QMb}
\newcommand{\bef}{\rightsquigarrow} %before (another option is
\newcommand*{\nbef}{\not\rightsquigarrow}
\newcommand*{\extension}{\Xi} 
\def\bibsection{\section*{REFERENCES}} 
\theoremstyle{plain}
\newtheorem{lemma}{Lemma}
\theoremstyle{definition}
\newtheorem{definition}{Definition}
\begin{document}

\title{No extension of quantum theory can have improved predictive
  power\footnote{The published version of this work can be found in
    {\it Nature Communications}~{\bf 2}, 411 (2011) at \url{http://www.nature.com/ncomms/journal/v2/n8/full/ncomms1416.html}.}}

\date{9th August 2011}
%\date{17th January 2011}

\author{Roger \surname{Colbeck}}
\email[]{rcolbeck@perimeterinstitute.ca}
\affiliation{Perimeter Institute for Theoretical Physics, 31 Caroline Street North, Waterloo, ON N2L 2Y5, Canada}
\author{Renato \surname{Renner}}
%\email[]{renner@phys.ethz.ch}
\affiliation{Institute for Theoretical Physics, ETH Zurich, 8093
Zurich, Switzerland}

\begin{abstract}
  According to quantum theory, measurements generate random outcomes,
  in stark contrast with classical mechanics.  This raises the
  question of whether there could exist an extension of the theory
  which removes this indeterminism, as suspected by Einstein, Podolsky
  and Rosen (EPR).  Although this has been shown to be impossible,
  existing results do not imply that the current theory is maximally
  informative.  Here we ask the more general question of whether
  \emph{any} improved predictions can be achieved by \emph{any}
  extension of quantum theory.  Under the assumption that measurements
  can be chosen freely, we answer this question in the negative: no
  extension of quantum theory can give more information about the
  outcomes of future measurements than quantum theory itself.  Our
  result has significance for the foundations of quantum mechanics, as
  well as applications to tasks that exploit the inherent randomness
  in quantum theory, such as quantum cryptography.
\end{abstract}

\maketitle

Given a system and a set of initial conditions, classical mechanics
allows us to calculate the future evolution to arbitrary precision.
Any uncertainty we might have at a given time is caused by a lack of
knowledge about the configuration.  In quantum theory, on the other
hand, certain properties|for example position and momentum|cannot both
be known precisely.  Furthermore, if a quantity without a defined
value is measured, quantum theory prescribes only the probabilities
with which the various outcomes occur, and is silent about the
outcomes themselves.

This raises the important question of whether the outcomes could be
better predicted within a theory beyond quantum mechanics~\cite{EPR}.
An intuitive step towards its answer is to consider appending
\emph{local hidden variables} to the theory~\cite{Bell}.  These are
classical variables that allow us to determine the experimental
outcomes (see later for a precise definition).  Here we ask a new,
more general question: is there \emph{any} extension of quantum theory
(not necessarily taking the form of hidden variables) that would
convey \emph{any} additional information about the outcomes of future
measurements?

We proceed by giving an illustrative example.  Consider a particle
heading towards a measurement device which has a number of possible
settings, denoted by a parameter, $A$, corresponding to the different
measurements that can be chosen by the experimenter.  The measurement
generates a result, denoted $X$.  For concreteness, one could imagine
a spin-$\frac{1}{2}$ particle incident on a Stern-Gerlach apparatus.
Each choice of measurement corresponds to a particular orientation of
the device and the outcome is assigned depending on which way the beam
is deflected.  Within quantum theory, a description of the quantum
state of the particle and of the measurement apparatus allows us to
calculate the distribution, $P_{X|A}$, of the outcome, $X$, for each
measurement choice, $A$.  Another example is described in
Figure~\ref{fig:setup}.

\begin{figure}
\includegraphics[width=0.5\textwidth]{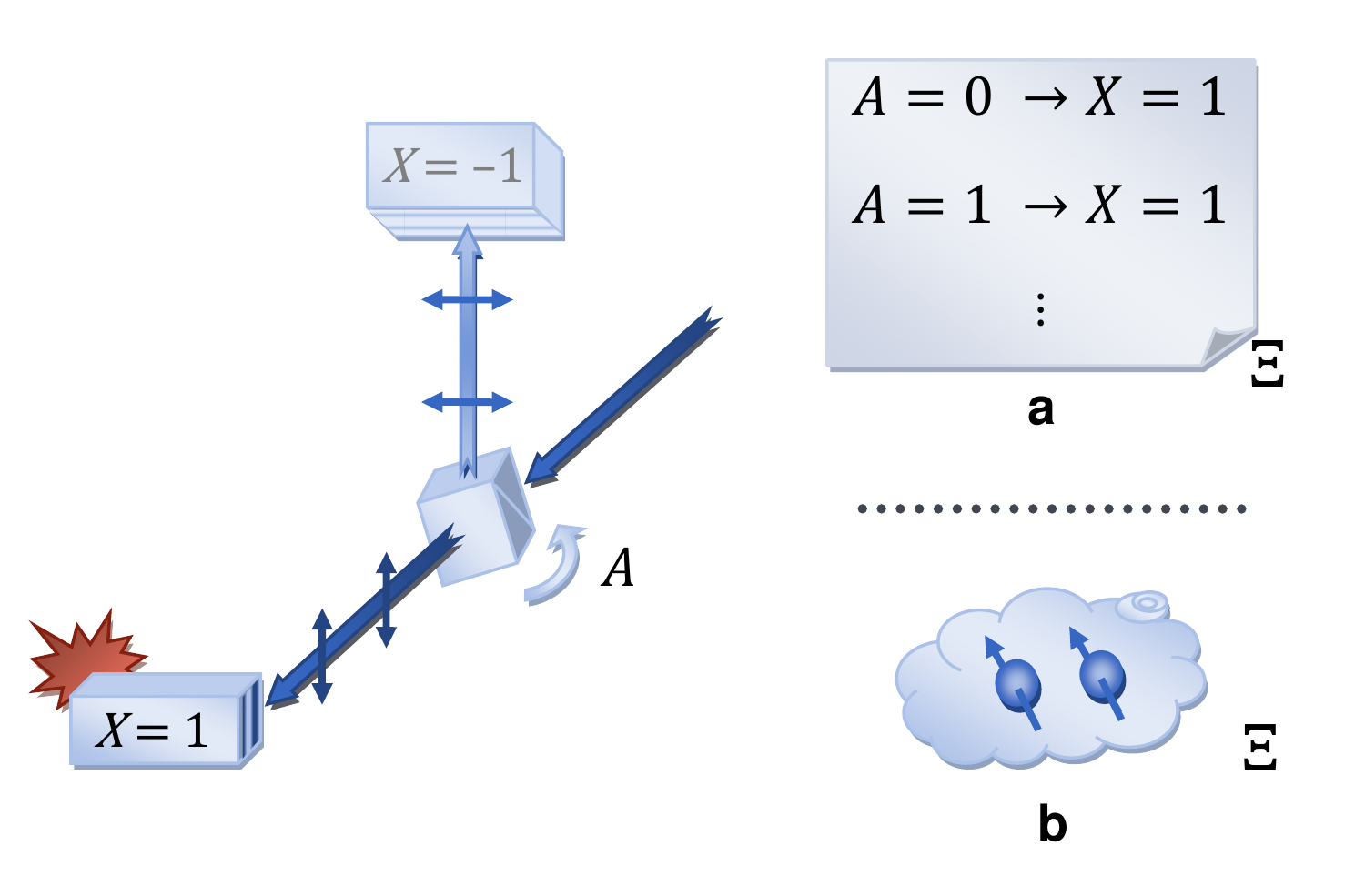}
  \caption{{\sf \textbf{Illustration of the scenario.} A measurement
      is carried out on a particle, depicted as a photon measured
      using an arrangement comprising a polarizing beam splitter and
      two detectors.  The measurement choice (the angle of the
      polarizing beam splitter) is denoted $A$ and the outcome, $X$,
      is assigned $-1$ or $1$ depending on which detector fires.  On
      the right, we represent the additional information that may be
      provided by an extended theory, $\extension$, shown here taking
      the form of either \textbf{a}~hidden variables, i.e., a
      classical list assigning outcomes, or \textbf{b}~a more general
      (e.g.\ quantum) system.}}
\label{fig:setup}
\end{figure}

In this work we consider the possibility that there exists additional,
yet to be discovered, information that allows the outcome $X$ to be
better predicted.  We do not place any restrictions on how this
information is manifest, nor do we demand that it allows the outcomes
to be calculated precisely.  In particular, it could be that the
additional information gives rise to a more accurate distribution over
the outcomes.  For example, in an experiment for which quantum theory
predicts a uniform distribution over the outcomes, $X$, there could be
additional information that allows us to calculate a value, $X'$, such
that $X=X'$ with probability $\frac{3}{4}$ (in the model proposed by
Leggett~\cite{leggett}, for instance, the local hidden variables
provide information of this type).  More generally, we allow for the
possibility of an extended theory that provides non-classical
information.  For example, it could comprise a ``hidden quantum
system'', which, if measured in the correct way, gives a value
correlated to~$X$.

\section{Results}
\subsection*{Assumptions}
In order to formulate our main claim about the non-extendibility of
quantum theory, we introduce a framework within which any arbitrary
additional information provided by an extension of the current theory
can be considered.  In the following, we explain this framework on an
informal level (see the Supplementary Information for a formal
treatment).

The crucial feature of our approach is that it is operational, in the
sense that we only refer to directly observable objects (such as the
outcome of an experiment), but do not assume anything about the
underlying structure of the theory.  Note that the outcome, $X$, of a
measurement is usually observed at a certain point in spacetime. The
coordinates of this point (with respect to a fixed reference system)
can be determined operationally using clocks and measuring
rods. Analogously, the measurement setting $A$ needs to be available
at a certain spacetime point (before the start of the experiment). To
model this, we introduce the notion of a \emph{spacetime random
  variable} (\aSV), which is simply a random variable together with
spacetime coordinates $(t, r_1, r_2, r_3)$. Operationally, a \aSV{}
can be interpreted as a value that is accessible at a given spacetime
point $(t, r_1, r_2, r_3)$.  We now model a measurement process as one
that takes an input, $A$, to an output, $X$, where both $X$ and $A$
are \aSV{}s.

Our result is based on the assumption that measurement settings can be
chosen freely (which we call Assumption~\emph{\aF{}}).  We note that
this assumption is common in physics, but often only made implicitly.
It is, for example, a crucial ingredient in Bell's theorem
(see~\cite{Bell_free}).  Formulated in our framework,
Assumption~\emph{\aF{}} is that the input, $A$, of a measurement
process can be chosen such that it is uncorrelated with certain other
\aSV{}s, namely all those whose coordinates lie outside the future
lightcone of the coordinates of $A$.  We note that this reference to a
lightcone is only used to identify a set of \aSV{}s, and does not
involve any assumptions about relativity theory (see the Supplementary
Information).  However, the motivation for Assumption~\emph{\aF{}} is
that, when interpreted within the usual relativistic spacetime
structure, it is equivalent to demanding that $A$ can be chosen such
that it is uncorrelated with any \emph{pre-existing} values in any
reference frame.  That said, the lack of correlation between the
relevant \aSV{}s could be justified in other ways, for example by
using a notion of ``effective freedom'' (discussed
in~\cite{Bell_free}).

We also remark that Assumption~\emph{\aF{}} is consistent with a
notion of relativistic causality in which an event $B$ cannot be the
\emph{cause} of $A$ if there exists a reference frame in which $A$
occurs before $B$.  In fact, our criterion for $A$ to be a \emph{free
  choice} is satisfied whenever anything correlated to $A$ could
potentially have been caused by $A$. However, in an alternative world
with a universal (frame-independent) time, one might reject
Assumption~\emph{\aF{}} and replace it with something weaker, for
example that $A$ is free if it is uncorrelated with anything in the
past with respect to this universal time.  Nevertheless, since
experimental observations indicate the existence of relativistic
spacetime, we use a notion of \emph{free choice} consistent with this.

We additionally assume that the present quantum theory is correct (we
call this Assumption~\emph{\aQ{}}).  This assumption is natural since
we are asking whether quantum theory can be \emph{extended}.  In fact,
we only require that two specific aspects of quantum theory hold, and
so split Assumption~\emph{\aQ{}} into two parts.  On an informal
level, the first is that measurement outcomes obey quantum statistics,
and the second is that all processes within quantum theory can be
considered as unitary evolutions if one takes into account the
environment (see the Supplementary Information for more details).  We
remark that the second part of this assumption need only hold for
microscopic processes on short timescales and does not preclude
subsequent wave function collapse.

\subsection*{Main Findings}
Consider a measurement which depends on a setting $A$ and produces an
output $X$.  According to quantum theory, we can associate a quantum
state and measurement operators with this process from which we can
compute the distribution $P_{X|A}$.

We ask whether there could exist an extension of quantum theory that
provides us with additional information (which we denote by
$\extension$) that is useful to predict the outcome.  In order to keep
the description of the information, $\extension$, as general as
possible, we do not assume that it is encoded in a classical system,
but instead characterize it by how it behaves when
observed. (Formally, we model access to $\extension$ analogously to
the measurement of a quantum system, i.e., as a process which takes an
input \aSV{} and produces an output \aSV{}.)  We demand that
$\extension$ can be accessed at any time (similarly to classical or
quantum information held in a storage device) and that it is
\emph{static}, i.e., its behaviour does not depend on where or when it
is observed.

Our main result is that we answer the above question in the negative,
i.e., we show that, using Assumptions~\emph{\aF{}} and~\emph{\aQ{}},
the distribution $P_{X|A}$ is the most accurate description of the
outcomes.  More precisely, for any fixed (pure) state of the system,
the chosen measurement setting, $A$, is the only non-trivial
information about $X$, and any additional information, $\extension$,
provided by an extended theory is irrelevant.  We express this via the
Markov chain condition
\begin{equation}\label{eq:Markov}
X\leftrightarrow A\leftrightarrow \extension\, .
\end{equation}
%\begin{align} \label{eq:Markov}
% \begin{picture}(83, 22)(0,20)
%  \put(0,0){ \includegraphics[width=0.15\textwidth]{markov4} }
% \end{picture} .
%\end{align}
%\mbox{}
This condition expresses mathematically that the distribution of $X$
given $A$ and $\extension$ is the same as the distribution of $X$
given only $A$~\cite{CovThoMarkov}.  Hence, access to $\extension$
does not decrease our uncertainty about $X$, and there is no better way
to predict measurement outcomes than by using quantum theory.

In the Methods, we sketch the proof of this (the full proof is
deferred to the Supplementary Information).

\section{Discussion}
%\subsection*{Experimental Verification}
% subsections are not allowed in Discussion
We now discuss experimental aspects related to our result.  Note that
at the formal level, we present a theorem about certain defined
concepts based on certain assumptions, hence what remains is to
connect our definitions to observations in the real world, and
experimentally confirm the assumptions, where possible.
Assumption~\emph{\aF}\ refers to the ability to make free choices
and|while we can never rule out that the universe is deterministic and
that free will is an illusion|this is in principle falsifiable, e.g.\
by a device capable of guessing an experimentalist's choices before
they are made.  (See also~\cite{CR_free} where the possibility of
weakening this assumption is discussed.)

The validity of Assumption~\emph{\aQ{}}\ could be
argued for based on experimental tests of quantum theory.  However,
the existence of the particular correlations we use in the second part
of our proof is quantum-theory independent, so worth establishing
separately.  Due to experimental inefficiencies, these correlations
cannot be verified to arbitrary precision.  Figure~\ref{fig:visib}
bounds our ability to experimentally establish~\eqref{eq:Markov}
depending on the quality of the setup used (characterized here by the
visibility).  For more details, see the Methods.

%\subsection*{Relation to previous work}
%\mbox{}
\bigskip

We proceed by discussing previous work on extensions of quantum
theory.  To the best of our knowledge, all such extensions that have
been excluded to date can also be excluded using our result.

The question asked by EPR~\cite{EPR} was whether quantum mechanics
could be considered complete.  They appealed to intuition to argue
that an extended theory should exist and one might then have hoped for
a \emph{deterministic completion}, i.e.\ one that would uniquely
determine the measurement outcomes|contrast this with our (more
general) notion, where the extended theory may only give partial
information.  Bell~\cite{Bell} famously showed that a deterministic
completion is not possible when the theory is supplemented by
\emph{local hidden variables}.  (To relate this back to our result,
this corresponds to the special case where the additional information,
$\extension$, is a classical value specified by the local hidden
variables.  A short discussion on the term \emph{local} can be found
in the Supplementary Information.)  Recently, a
conclusion~\cite{ConKoc_FW1} similar to Bell's has been reached using
the Kochen-Specker theorem~\cite{KS}.  These results have been
extended to arbitrary (i.e.\ not necessarily local) hidden
variables~\cite{gisin_covariant, blood} under the assumption of
relativistic covariance (see also~\cite{CRcomment}, as well as
\cite{tresser} where a condition slightly weaker than locality is used
to derive a theorem similar to Bell's).

The aforementioned papers left open the question of whether there
could exist an extended theory which provides additional information
about the outcomes without determining them completely.  (Note that,
in his later works, Bell uses definitions that potentially allow
probabilistic models~\cite{Bell_nouvelle}.  However, as explained in
the Supplementary Information, non-deterministic models are not
compatible with Bell's other assumptions.)  In the case that the
additional information takes the form of \emph{local hidden
  variables}, an answer to the above question can be found
in~\cite{leggett,BBGKLLS,ColbeckRenner}, and the strongest result is
that any local hidden variables are necessarily uncorrelated with the
outcomes of measurements on Bell states~\cite{ColbeckRenner}. (We
remark that the model in~\cite{leggett} also included non-local hidden
variables.  However, we have not referred to these in this paragraph,
since, as mentioned below in the context of de Broglie-Bohm theory,
the presence of non-local hidden variables contradicts
Assumption~\emph{\aF{}}.)

In the present work, we have taken this idea further and excluded the
possibility that \emph{any} extension of quantum theory (not
necessarily in the form of local hidden variables) can help predict
the outcomes of \emph{any} measurement on \emph{any} quantum state.
In this sense, we show the following: under the assumption that
measurement settings can be chosen freely, quantum theory really is
complete.

We remark that several other attempts to extend quantum theory have
been presented in the literature, the de Broglie-Bohm
theory~\cite{deBroglie,Bohm} being a prominent example (this model
recreates the quantum correlations in a deterministic way but uses
non-local hidden variables, see e.g.~\cite{Bell_Bohm} for a summary).
Our result implies that such theories necessarily come at the expense
of violating Assumption~\emph{\aF}.

Another way to generate candidate extended theories is via models
which simulate quantum correlations.  We discuss the implications
of our result in light of such models in the Supplementary
Information.  In addition, we remark that a claim in the same spirit
as ours has recently been obtained based on the assumption of
non-contextuality~\cite{ChenMontina}.

%\subsection*{Concluding Remarks}
\bigskip

Randomness is central to quantum theory and with it comes a range of
philosophical implications.  In this Article we have shown that the
randomness is inherent: any attempt to better explain the outcomes of
quantum measurements is destined to fail.  Not only is the universe
not deterministic, but quantum theory provides the ultimate bound on
how unpredictable it is.  Aside from these fundamental implications,
there are also practical ones.  In quantum cryptography, for example,
the unpredictability of measurement outcomes can be quantified and
used to restrict the knowledge of an adversary.  Most security proofs
implicitly assume that quantum theory cannot be extended (although
there are exceptions, the first of which was given in~\cite{BHK}).
However, in this work, we show that this follows if the theory is
correct.

\section{Methods}
Our main result is the following theorem whose proof we sketch here
(see the Supplementary Information for the formal treatment).

%\mbox{}
\bigskip

%\begin{theorem}\label{thm:markov}
\noindent \emph{Theorem 1.}|For any quantum measurement with input
\aSV{} $A$ and output \aSV{} $X$ and for any additional information,
$\extension$, under Assumptions~\emph{\aQ} and~\emph{\aF}, the Markov
chain condition~\eqref{eq:Markov} holds.
%\end{theorem}

%\mbox{}
\bigskip

The proof is divided into three parts.  The first two are related to a
Bell-type setting, involving measurements on a maximally entangled
state.  In Part~I, we show that Assumption~\emph{\aF}\ necessarily
enforces that $\extension$ is \emph{non-signalling} (in the sense
defined below).  In Part~II we show that for a particular set of
bipartite correlations, if $\extension$ is non-signalling, it cannot
be of use to predict the outcomes.  These correlations occur in
quantum theory (cf.\ the first part of Assumption~\emph{\aQ{}}) when
measuring a maximally entangled state and hence we conclude that no
$\extension$ can help predict the outcomes of measurements on one half
of such a state.  Finally, in Part~III, we use the second part of
Assumption~\emph{\aQ{}} to argue that this conclusion also applies to
all measurements on an arbitrary (pure) quantum state.  Together,
these establish our claim.

The bipartite scenario used for the first two parts of the proof
involves two quantum measurements, with inputs $A$ and $B$ and
respective outputs $X$ and $Y$.  The setup is such that the two
measurements are spacelike separated in the sense that the coordinates
of $A$ are spacelike separated with the coordinates of $Y$, and,
likewise, those of $B$ are spacelike separated with those of~$X$.

As mentioned in the main text, we model the information provided by
the extended theory, $\extension$, by its behaviour under observation.
We introduce a~\aSV{}, $C$, which can be thought of as the choice of
what to observe, and another~\aSV{}, $Z$, which represents the outcome
of this observation.  In terms of these variables, our main result,
Equation~\eqref{eq:Markov}, can be restated that for all values of
$a$, $c$ and $x$, we have 
\begin{align}
  P_{Z|acx}=P_{Z|ac} \ .  \label{eq:Markalt}
\end{align}
(Note that we use lower case to denote specific values of the
corresponding upper case \aSV{}s.)

%\mbox{}
\bigskip

\emph{Proof: Part~I.}|The entire setup described above (including the
additional information $\extension$, accessed by choosing an
observable, $C$, and obtaining an outcome, $Z$) gives rise to a joint
distribution $P_{XYZ|ABC}$.  The purpose of this part of the proof is
to show that Assumption~\emph{\aF}\ implies that $P_{XYZ|ABC}$ must
satisfy particular constraints, called \emph{non-signalling
  constraints}, which characterize situations where operations on
different isolated systems cannot affect each other.  Formally, these
are 
\begin{align}
P_{YZ|ABC} & =P_{YZ|BC}  \label{eq:nsone} \\
P_{XZ|ABC} & =P_{XZ|AC} \label{eq:nstwo} \\
P_{XY|ABC} & =P_{XY|AB}
\end{align}
We remark that the observation that the assumption of free choice
gives rise to certain non-signalling constraints has been made already
in~\cite{CRcomment}, and a similar argument has been presented by
Gisin~\cite{gisin_covariant} and Blood~\cite{blood}. (Note that the
arguments in~\cite{gisin_covariant, blood} implicitly assume that
measurements can be chosen freely).

Assumption~\emph{\aF}\ allows us to make $A$ a free choice and hence
we have 
\begin{align} \label{eq:freechoice}
  P_{A|BCYZ}=P_A
\end{align}
(the setup is such that the measurements specified by $A$ and $B$ are
spacelike separated and, furthermore, $\extension$ is static, so we
can consider the case where its observation is also spacelike
separated from the measurements specified by $A$ and
$B$). Furthermore, using the definition of conditional probability
($P_{Q|R} := P_{Q R} / P_R$), we can write
\begin{align*}
P_{YZA|BC}
=  P_{YZ|BC}\times P_{A|BCYZ}
=  P_A\times P_{YZ|BC}  \ ,
\end{align*}
where we inserted~\eqref{eq:freechoice} to obtain the second
equality. Similarly, we have
\begin{align*}
P_{YZA|BC}
= P_{A|BC}\times P_{YZ|ABC}
= P_A\times P_{YZ|ABC} \ .
\end{align*}
Comparing these two expressions for $P_{YZA|BC}$ yields the desired
non-signalling condition~\eqref{eq:nsone}.  By a similar argument the
other non-signalling conditions can be inferred from
Assumption~\emph{\aF}.

\bigskip
%\mbox{}

\emph{Proof: Part~II.}|For the second part of the proof, we consider
the distribution $P_{XY|AB}$ resulting from certain appropriately
chosen measurements on a maximally entangled state.  We show that any
enlargement of this distribution (via a system $\extension$ that is
accessed in a process with input \aSV{} $C$ and output \aSV{} $Z$) to
a distribution $P_{XYZ|ABC}$ which satisfies the above non-signalling
conditions is necessarily trivial in the sense that $\extension$ is
uncorrelated to the rest.  For this we draw on ideas from
\emph{non-signalling cryptography}~\cite{BHK}, which are related to
the idea of basing security on the violation of Bell
inequalities~\cite{Ekert}.  Technically, we employ a lemma (see
Lemma~1 in the Supplementary Information), whose proof
is based on \emph{chained Bell inequalities}~\cite{pearle,BC} and
generalizes results of~\cite{BKP,ColbeckRenner}.

Consider any bipartite measurement with inputs $A \in
\{0,2,\ldots,2N-2\}$ and $B\in\{1,3,\ldots,2N-1\}$, for some $N \in
\mathbb{N}$, and binary outcomes, $X$ and $Y$.  The correlations of
the outcomes can be quantified by
\begin{align} \label{eq:INdef}
I_N := P(X = Y | A=0, B=2 N -1)\ +
\\\nonumber\sum_{\genfrac{}{}{0pt}{}{a,b}{|a-b|=1}} P(X \neq Y |A=a, B=b)  \ .
\end{align}
Our lemma then asserts that, under the non-signalling conditions
derived in Part~I, 
\begin{align}
D(P_{Z|abcx},P_{Z|abc})&\leq I_N \label{EQ:DBOUND}
\end{align}
for all $a$, $b$, $c$ and $x$, where $D$ is the \emph{variational
  distance}, defined by
$D(P_Z,Q_Z):=\frac{1}{2}\sum_z|P_Z(z)-Q_Z(z)|$.  The variational
distance has the following operational interpretation: if two
distributions have variational distance at most~$\delta$, then the
probability that we ever notice a difference between them is at
most~$\delta$.

The argument up to here is formally independent of quantum theory.
However, as we describe below (see the Experimental Verification
section), for any fixed orthogonal rank-one measurement on a two-level
subsystem, one can construct $2N-1$ other measurements such that,
according to quantum theory, applying these measurements to maximally
entangled states leads to correlations which satisfy
$I_N\propto\frac{1}{N}$.  It follows that, in the limit of large $N$,
an arbitrarily small bound on $D(P_{Z|abcx},P_{Z|abc})$ can be
obtained. We thus conclude that $P_{Z|abcx} = P_{Z|abc}$, which, by
the non-signalling condition~\eqref{eq:nstwo}, also
implies~\eqref{eq:Markalt}. We have therefore shown that the
relation~\eqref{eq:Markov} holds for the outcome $X$ of any orthogonal
rank-one measurement on a system that is maximally entangled with
another one (our claim can be readily extended to systems of dimension
$2^t$ for positive integer $t$ by applying the result to $t$ two-level
systems).

We also remark that Markov chains are reversible, i.e.\
$P_{Z|abcx}=P_{Z|abc}$ implies $P_{X|abcz}=P_{X|abc}$, which together
with the non-signalling conditions gives $P_{X|abcz}=P_{X|a}$.  This
establishes that, for any choices of $B$ and $C$, learning $Z$ does
not allow an improvement on the quantum predictions, $P_{X|a}$.

%\mbox{}
\bigskip

\emph{Proof: Part~III.}|To complete our claim, it remains to show that
the Markov chain condition~\eqref{eq:Markov} holds for measurements on
arbitrary states (not only for those on one part of a maximally
entangled state shared between two sites). The proof of this proceeds
in two steps. The first is to append an additional measurement with
outcome $X'$, chosen such that the pair $(X,X')$ is uniformly
distributed. In the second step, we split the measurement into two
conceptually distinct parts, where, in the first, the measurement
apparatus becomes entangled with the system to be measured (and,
possibly the environment) and, in the second, this entangled state is
measured giving outcomes $(X,X')$. Since these outcomes are uniformly
distributed, the state before the measurement can be considered
maximally entangled, so that~\eqref{eq:Markov} holds with $X$ replaced
by $(X,X')$.  This implies~\eqref{eq:Markov} and hence completes the
proof of Theorem~1.

\bigskip
%\mbox{} 

\begin{figure}
\includegraphics[width=0.45\textwidth]{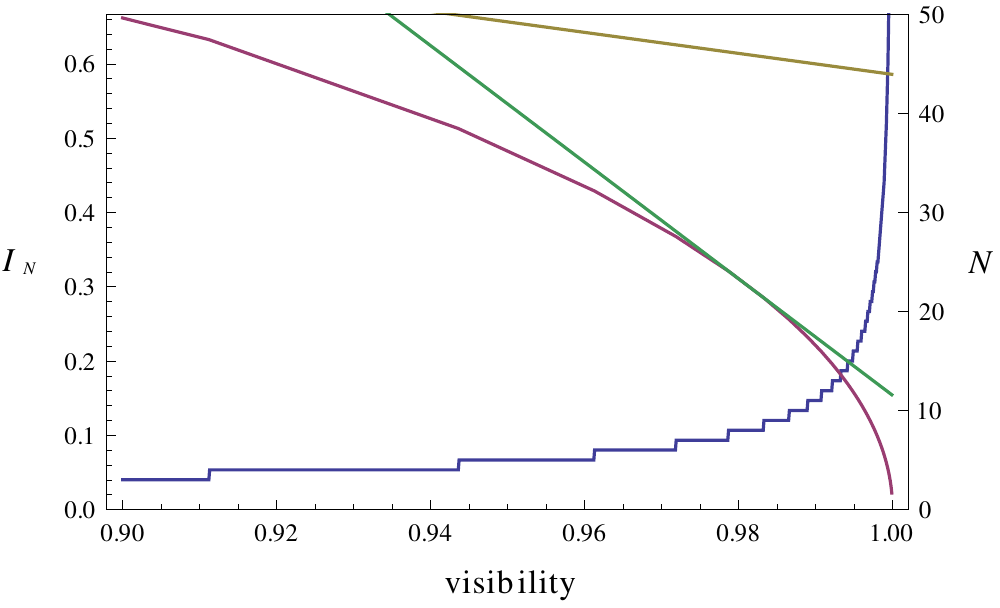}
\caption{{\sf \textbf{Achievable values of $I_N$ depending on the
      experimental visibility.} This figure relates to the measurement
    setup used for testing the accuracy of Assumption~\emph{\aQ{}} as
    described in the Methods.  The setup involves two parties and is
    parameterized by the number of possible measurement choices
    available to each party, $N$.  The plot gives the minimum $I_N$
    achievable depending on the visibility (red line), which
    determines the smallest upper bound on the variational distance
    from the perfect Markov chain condition~\eqref{eq:Markov} that
    could be obtained with that visibility (see
    Eq.~\eqref{EQ:DBOUND}).  It also shows the optimal value of $N$
    which achieves this (blue line).  For comparison, the values
    achievable using $N=2$, which corresponds to the CHSH
    measurements~\cite{CHSH} (yellow line), and the case $N=8$, which
    is optimal for visibility $0.98$ (green line), are shown.}}
\label{fig:visib}
\end{figure}

\emph{Experimental Verification}|As explained above, the validity of
parts of Assumption~\emph{\aQ{}} can be established by a direct
experiment.  In particular, to verify the existence of the
correlations required for Part~II of the proof, i.e.\ those with small
$I_N$, one should generate a large number (much larger than $N$) of
maximally entangled particles and distribute them between the
measurement devices.  At spacelike separation, a two-level subsystem
(e.g.\ a spin degree of freedom) should then be measured, the
measurement being picked at random from those specified below, and the
results recorded.  This is repeated for all of the particles.  The
measurement choices and results are then collected and used to
estimate the terms in $I_N$ using standard statistical techniques.

For an arbitrary orthogonal basis $\{\ket{0}, \ket{1}\}$, the required
measurements can be constructed in the following way.  Recall that the
choice of measurement on one side takes values $A\in\{0,2,\ldots,
2N-2\}$ and similarly, $B\in\{1,3,\ldots, 2N-1\}$.  We define a
set of angles $\theta^j=\frac{\pi}{2 N}j$ and states
$$\{\ket{\theta^j_{+}},\ket{\theta^j_{-}}\}\!=\!\left\{\cos\!\frac{\theta^j}{2}\ket{0}+\sin\!\frac{\theta^j}{2}\ket{1},\sin\!\frac{\theta^j}{2}\ket{0}-\cos\!\frac{\theta^j}{2}\ket{1}\!\right\}\!.$$
The required measurement operators are then
$E_{\pm}^a=\ketbra{\theta^a_{\pm}}{\theta^a_{\pm}}$ and
$F_{\pm}^b=\ketbra{\theta^b_{\pm}}{\theta^b_{\pm}}$.

Although quantum theory predicts that arbitrarily small values of
$I_N$ can be obtained for large $N$, due to imperfections and errors
in the devices, it will not be possible to experimentally achieve
this.  In~\cite{suarez}, a discussion of the achievable values of
$I_N$ with imperfect visibilities was given.  For visibilities less
than 1, it is not optimal to take $N$ as large as possible to minimize
the observed $I_N$.  Thus, to get increasingly small bounds on the
variational distance in~\eqref{EQ:DBOUND}, one must increase the
experimentally obtained visibilities as well as the number of
measurement settings (see Figure~\ref{fig:visib}).

%\mbox{}

%\bibliographystyle{naturemag}
%\bibliography{../../../bibtex}

\section*{Acknowledgements}
We are grateful to {\v{C}}aslav Brukner, Ad\'an Cabello, Jerry
Finkelstein, J\"urg Fr\"ohlich, Nicolas Gisin, Gian Michele Graf,
Adrian Kent, Jan-\AA ke Larsson, Llu\'is Masanes, Nicolas Menicucci,
Ognyan Oreshkov, Stefano Pironio, Rainer Plaga, Sofia Wechsler and
Anton Zeilinger for discussions on this work and thank L\'idia del Rio
for the illustrations.  This work was supported by the Swiss National
Science Foundation (grant Nos.\ 200021-119868 and 200020-135048) and
the European Research Council (grant No.\ 258932).  Research at
Perimeter Institute is supported by the Government of Canada through
Industry Canada and by the Province of Ontario through the Ministry of
Research and Innovation.

\begin{widetext}
%\appendix
%\documentclass[aps,preprintnumbers,nofootinbib,superscriptaddress,onecolumn]{revtex4}
% letterpaper
% Package(s) to include
%\usepackage{epsfig,graphics}
%\usepackage{amsfonts,amssymb,amsmath}            % for math symbols.
%\usepackage{amsthm}

%\newcommand{\ot}{\otimes}
%\newcommand{\comment}[1]{}
%\newcommand{\ket}[1]{|#1\rangle}
%\newcommand{\bra}[1]{\langle #1|}
%\newcommand{\ketbra}[2]{|#1\rangle\!\langle#2|}
%\newcommand{\proj}[1]{|#1\rangle\!\langle #1|}
%\newcommand*{\tr}{\mathrm{tr}}
%\newcommand*{\cA}{\mathcal{A}}
%\newcommand*{\cE}{\mathcal{E}}
%\newcommand*{\cH}{\mathcal{H}}
%\newcommand*{\cI}{\mathcal{I}}
%\newcommand*{\cM}{\mathcal{M}}
%\newcommand*{\cU}{\mathcal{U}}
%\newcommand*{\bcE}{\mathcal{\bar{E}}}
%\newcommand{\aSV}{SV}
%\newcommand{\aPC}{PC}
%\newcommand{\aR}{MT}
%\newcommand{\aF}{FR}
%\newcommand{\aQ}{QM}
%\newcommand{\aQa}{QMa}
%\newcommand{\aQb}{QMb}
%\newcommand{\aQc}{QMc}
%\newcommand{\bef}{\rightsquigarrow} %before (another option is
                                %\Rrightarrow)
%\newcommand*{\nbef}{\not\rightsquigarrow}
%\newcommand*{\extension}{\Xi} 

% This labels the equations S1, S2 etc.
\makeatletter
\def\tagform@#1{\maketag@@@{(S\ignorespaces#1\unskip\@@italiccorr)}}
\makeatother
\numberwithin{equation}{subsection}

\def\bibsection{\section*{SUPPLEMENTARY REFERENCES}} 

%\begin{document}

%\title{No extension of quantum theory can have improved predictive
%  power\\Supplementary Information}

%\date{\today}

%\author{Roger \surname{Colbeck}}
%\email[]{rcolbeck@perimeterinstitute.ca}
%\affiliation{Perimeter Institute for Theoretical Physics, 31 Caroline Street North, Waterloo, ON N2L 2Y5, Canada.}
%\author{Renato \surname{Renner}}
%\email[]{renner@phys.ethz.ch}
%\affiliation{Institute for Theoretical Physics, ETH Zurich, 8093
%Zurich, Switzerland.}

\maketitle

%\section*{SUPPLEMENTARY FIGURE}

%\newpage

\section*{SUPPLEMENTARY METHODS}

\section*{FORMAL STATEMENT OF THE CLAIM}

In this section, we provide a formal description of our result and the
assumptions it is based on. Their physical significance is explained
in the main text.

\subsection*{Definitions}

\begin{definition} A \emph{spacetime random variable} (\aSV), $X$, is
  a random variable%\footnote{A note on notation: we will use upper
%    case to represent random variables or \aSV{}s and lower case to
%    represent specific values taken by such variables.}  
  together with a set of coordinates $(t, r_1, r_2, r_3) \in
  \mathbb{R}^4$.
\end{definition}

The coordinates can be used to define an order relation between
\aSV{}s, which one may interpret as a time ordering within
relativistic spacetime. (Note, however, that on a formal level, we do
not require any assumptions about relativity theory.)

\begin{definition}
  We say that a pair $(A, X)$ of \aSV{}s is \emph{time-ordered},
  denoted $A \bef X$, if the coordinate $(t, r_1, r_2, r_3)$ of $A$
  lies in the backward lightcone of the coordinate $(t', r'_1, r'_2,
  r'_3)$ of $X$, i.e., $(t-t')^2 \geq ||r-r'||^2$, $t \leq t'$.
  Furthermore, we say that two time-ordered pairs $A \bef X$ and $B
  \bef Y$ are \emph{spacelike separated} if $A \nbef Y$ and $B \nbef
  X$.
\end{definition}

The next two definitions refer to quantum theory or, more precisely,
quantum measurements. They will be used later for the formulation of
Assumption~\emph{\aQ{}}.

\begin{definition} \label{def:measurement}
  A \emph{quantum measurement}, denoted $(A \bef X,\,
  \{E_x^a\}_{a,x},\, \cH_S)$, is a pair of time-ordered \aSV{}s, $A
  \bef X$, called \emph{input} and \emph{output}, respectively,
  together with a family of measurement operators $\{E_x^a\}_{a,x}$ on
  a Hilbert space, $\cH_S$, such that $\sum_x
  (E_x^a)^{\dagger}E_x^a=\openone_{S}$ for all $a$.
\end{definition}

We interpret the input $A$ as the choice of an observable and $X$ as
the outcome of the measurement with respect to this
observable. Quantum theory determines the distribution of $X$
conditioned on $A$, depending on the quantum state $\rho_S$ of the
system to which the measurement is applied.

\begin{definition}\label{def:4}
  Given a density operator $\rho_S$ on $\cH_S$, the quantum
  measurement $(A \bef X,\, \{E_x^a\}_{a,x},\, \cH_S)$ is said to be
  \emph{compatible} with $\rho_S$ if
  \[
    P_{X|A}(x|a)=\tr((E_x^a)^{\dagger}E_x^a \rho_S) \ ,
  \]
  for all $a$ and $x$.  Likewise, a pair of quantum measurements $(A
  \bef X,\, \{E_x^a\}_{a,x},\, \cH_S)$ and $(B \bef Y,\,
  \{F_y^b\}_{b,y},\, \cH_T)$ is said to be \emph{compatible} with
  $\rho_{ST}$ defined on $\cH_S\ot\cH_T$ if
  \[
  P_{XY|AB}(xy|ab)=\tr\bigl[\bigl((E_x^a)^{\dagger}E_x^a\ot
  (F_y^b)^{\dagger}F_y^b\bigr)\rho_{ST}\bigr] \ ,
  \]
  for all $a$, $b$, $x$ and $y$.
\end{definition}

We describe the process of \emph{choosing} a value $A$ as a pair of
\aSV{}s, $O_A \bef A$, where $O_A$ is called the \emph{trigger event}
($O_A$ may be a constant). The process is considered \emph{free} if
the outcome $A$ is not correlated to anything that existed before the
trigger event $O_A$ in any reference frame.

\begin{definition}
  Given a set of \aSV{}s $\Gamma$, a \emph{free choice (with respect
    to $\Gamma$)} is a pair of time-ordered \aSV{}s, $O_A \bef A$,
  such that $A$ is statistically independent of the collection
  $\Gamma' := \{ W \in \Gamma: \, O_A \nbef W \}$, i.e., $P_{A
    \Gamma'} = P_{A} \times P_{\Gamma'}$. 
\end{definition}

\subsection*{Quantum-Mechanical Description of the Measurement Process}

Before stating our assumptions, let us briefly recall the
quantum-mechanical description of a measurement process. Most
generally, a quantum measurement on a system $S$ is described by a
family $\{E_x\}_x$ of operators acting on a Hilbert space $\cH_S$ such
that $\sum_x E_x^{\dagger} E_x = \openone$. If the state of $S$ before
the measurement is given by a density operator $\rho_S$ then each
possible outcome $X = x$ has probability
\begin{align*}
  P_X(x) = \tr( E_x^{\dagger}  E_x \rho_S) \ . 
\end{align*}
(Note that this is reflected by Definitions~\ref{def:measurement}
and~\ref{def:4}.)  Furthermore, conditioned on this outcome, the state
of $S$ after the measurement is
\begin{align*}
  \sigma_S^{(x)} = \frac{E_x \rho_S E_x^{\dagger}}{P_X(x)} \ .
\end{align*}
Averaged over all outcomes, the state is therefore given by
$\sigma_S=\cE(\rho_S)$, where $\cE$ is the trace-preserving
completely positive map (TPCPM) defined by
\begin{align*}
  \cE : \, \rho_S \mapsto \sigma_S= \sum_x P_X(x) \sigma_S^{(x)} = \sum_{x} E_x \rho_S E_x^{\dagger} \ .
\end{align*}

The TPCPM $\cE$ can be seen as part of an extended TPCPM $\bcE : \,
\rho_{S} \mapsto \sigma_{S D R}$ (in the sense that $\cE = \tr_{D R}
\circ \bcE$) which specifies the joint state $\sigma_{S D R}$ of $S$,
the measurement device, $D$, and possibly (parts of) the environment,
$R$, after the measurement (one may think of $\bcE$ as describing the
joint evolution that the system $S$, measurement device $D$ and the
environment $R$ undergo during a measurement).  By choosing a
sufficiently large environment, we can always take $\bcE$ to be an
isometry.  Since the measurement outcome $X$ is determined by the
final state of the measurement device $D$, there exists a family of
mutually orthogonal projectors $\{\Pi_x\}_x$ on the associated Hilbert
space $\cH_D$, where each $\Pi_x$ projects onto the subspace
containing the support of the state of $D$ corresponding to outcome $X
= x$. Formally, this corresponds to the requirement that
\begin{align} \label{eq:CPMdesc} \forall x: \, \tr_{DR} \bigl[ \bcE(\rho_{S}) (\openone_S \otimes \Pi_x \otimes \openone_R) \bigr]=  E_x \rho_{S}
 E_x^{\dagger} \ .
\end{align}

\subsection*{Assumptions}

To formulate our assumptions as well as our main claim, we consider an
arbitrary quantum measurement 
\begin{align} \label{eq:measurement}
  (A \bef X,\, \{E_x^{a}\}_{{a},x},\,\cH_S)
\end{align} 
with constant input $A=\bar{a}$ and output $X$. Furthermore, we consider
two \aSV{}s, $C$ and $Z$, such that $C\bef Z$, which model the access
to extra information provided by a potential extended theory.

\smallskip
\smallskip

Our first assumption demands that the measurement we consider is
correctly described by quantum mechanics.

\bigskip

\noindent \emph{{\bf Assumption~\emph{\aQa{}}.} There exists a pure
  quantum state $\rho_S$ which is compatible with the quantum
  measurement~\eqref{eq:measurement}.}

\bigskip

For the next assumption, let $\bcE: \, \rho_S \mapsto \sigma_{SDR}$ be
an isometry from $\cH_S$ to $\cH_S\ot\cH_D\ot\cH_R$ and let
$\{\Pi_x\}_x$ be a family of projectors such that~\eqref{eq:CPMdesc}
holds for the operators $\{E_x^{\bar{a}}\}_x$ specified by the
measurement~\eqref{eq:measurement}.\footnote{There are many ways to
  choose $\bcE$ and $\{\Pi_x\}_x$ with this property; our next
  assumption need only hold for one such choice.} The isometry $\bcE$
models the joint evolution of the system, $S$, on which the
measurement~\eqref{eq:measurement} is carried out, the measurement
device, $D$, and the parts of the environment, $R$, that may have been
affected by the measurement.\footnote{Note that,
  for~\eqref{eq:CPMdesc} to hold, it is sufficient that $\bcE$
  describes the interaction between $S$ and $D$ (and possibly $R$) on
  a microscopic scale and for a short time.  Hence, the fact that
  $\bcE$ is an isometry does not preclude subsequent ``collapse'' of
  the wave function.} We then consider arbitrary measurements
$\{F_x^{a}\}_{a,x}$ and $\{G_y^{b}\}_{b,y}$ on the subsystems $D$ and
$S R$, respectively, with the property that $F_x^{\bar{a}}=\Pi_x$.

The following assumption demands that the statistics produced by these
additional measurements are as predicted by quantum
theory. Furthermore, the outcome $X$ of the initial
measurement~\eqref{eq:measurement} can be recovered by measuring (in
an appropriate basis) the state of the device $D$ used for this
measurement.

\bigskip
%\section*{SUPPLEMENTARY FIGURES}

\noindent \emph{{\bf Assumption~\emph{\aQb}.}  
%  There exists an isometry $\bcE$ and a family of projectors
%  $\{\Pi_x\}_x$ such that the following holds for any
%  $\{F_x^a\}_{a,x}$ and $\{G_y^b\}_{b,y}$.
  %For any $\{F_x^a\}_{a,x}$ and $\{G_y^b\}_{b,y}$ with
  %$F_x^{\bar{a}}=\Pi_x$, and 
  For appropriately defined \aSV{}s $A', X', B, Y$, the quantum
  measurements $(A' \bef X',\, \{F^a_x\}_{a,x},\, \cH_D)$ and $(B \bef
  Y,\, \{G_y^b\}_{b,y}, \cH_S\ot\cH_R)$ are compatible with $\sigma_{S
    D R} = \bcE(\rho_{S})$. Furthermore, the measurement on $D$ is
  consistent with the initial measurement~\eqref{eq:measurement}, in
  the sense that $X' = X$ whenever $A' = A = \bar{a}$.}

\bigskip

While the above assumptions are essentially consequences of the
requirement that the existing quantum theory is correct, our last
assumption demands that the measurement settings can be chosen freely.

\begin{center}
\includegraphics[width=0.4\textwidth]{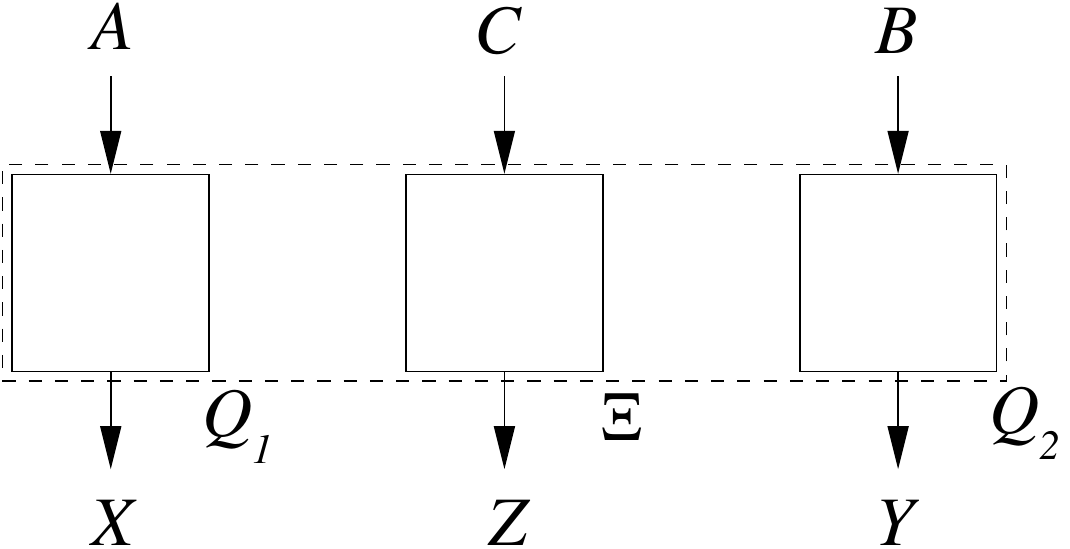}
\end{center} {\sf \textbf{Supplementary Figure S1 \!$|$\ \!Abstraction of the setup.} $Q_1$ and $Q_2$ depict a
pair of quantum systems with inputs $A$ and $B$ and outputs $X$ and
$Y$ respectively.  $\extension$ is a system which represents the
additional information provided by the extended theory.  Although
these three systems (solid boxes) can be independently manipulated,
they form parts of a larger system (dotted box).  While no restriction
is placed on the internal behaviour of the larger system, it follows
from Part~I of the proof that the combined distribution, $P_{X Y Z | A
  B C}$, is non-signalling.}\bigskip%\bigskip

\emph{\noindent {\bf Assumption~\emph{\aF}.}  There exist \aSV{}s
  $O_A$, $O_B$ and $O_C$ with $O_A \bef X'$, $O_B \bef Y$ and $O_C \bef
  Z$ spacelike separated such that $O_A \bef A'$, $O_B \bef B$ and
  $O_C \bef C$ are free choices with respect to $\{A',B,C,X',Y,Z\}$,
  and all possible values of $A'$ and $B$ are taken with nonzero
  probability.}

\subsection*{Main Claim}

\noindent {\bf Theorem~1.}  If the quantum
measurement~\eqref{eq:measurement}, modelled by the pair $A \bef X$,
and the additional information, $C \bef Z$, are such that
Assumptions~\emph{\aQa},~\emph{\aQb} and~\emph{\aF} are satisfied then
the Markov chain condition $X \leftrightarrow (A, C) \leftrightarrow
Z$ holds.

%\bigskip
%\noindent(This corresponds to the pictorial version~\eqref{eq:Markov}.)
%We remark that (as implied by the notation) Markov chains are
%reversible, i.e.\ $P_{Z|abcx}=P_{Z|abc}$ implies
%$P_{X|abcz}=P_{X|abc}$, which, together with the non-signalling
%conditions derived in Part~I of the proof, gives $P_{X|abcz}=P_{X|a}$.

\section*{PART~II OF THE PROOF}%\label{app:1}

In this section, we prove the core inequality of Part~II of our proof,
Eqn.~8 in the Methods, which is stated as Lemma~\ref{lem:main}
below.

Recall the bipartite scenario described in the main text. The
measurements at each site are parameterized by values $A \in
\{0,2,\ldots,2N-2\}$ and $B \in \{1,3,\ldots,2N-1\}$ for some $N \in
\mathbb{N}$, and their respective outcomes, $X$ and $Y$, are taken to
be binary.  The measurements give rise to a joint probability
distribution $P_{XY|AB}$ from which we quantify the correlations
relevant for our statement in terms of $I_N$ defined by
\begin{align*}
I_N(P_{XY|AB}) := P(X = Y | A=0, B=2 N -1)\ + \sum_{\genfrac{}{}{0pt}{}{a,b}{|a-b|=1}} P(X \neq Y |A=a, B=b)  \ .
\end{align*}

We consider enlargements of this probability distribution, $P_{X Y Z | A
 B C}$ (see Figure~S1), that satisfy the non-signalling property (cf.\ Part~I of the proof), i.e.,
\begin{align}
P_{X Y | A B C} & = P_{X Y | A B} \label{eq:NS1} \\
P_{X Z | A B C} & = P_{X Z | A C} \label{eq:NS2} \\
P_{Y Z | A B C} & = P_{Y Z | B C} \label{eq:NS3} \ .
\end{align}
The claim is that any such extension approximately satisfies
$P_{Z|abcx} = P_{Z|abc}$, i.e., $Z$ is independent of $X$ for any
choices of $a$, $b$ and $c$. The accuracy of the approximation is measured
in terms of the variational distance.  For two distributions, $P_X$
and $P_Y$ over identical alphabets, this is defined by $D(P_X,P_Y) :=
\frac{1}{2}\sum_i|{P_X(i)-P_Y(i)}|$.

\begin{lemma}\label{lem:main}
  For any non-signalling probability distribution, $P_{XYZ|ABC}$, in
  which the random variables $X$ and $Y$ are binary, we have
\begin{align}\label{eq:markov}
D(P_{Z|abcx},P_{Z|abc})&\leq I_N(P_{XY|AB})
\end{align}
for all $a$, $b$, $c$, and $x$.
\end{lemma}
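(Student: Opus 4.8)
The plan is to convert the statement---which at first sight is about how much $Z$ reveals about $X$---into a chained-Bell estimate on a cyclic family of two-outcome tables, and only at the very end translate back to the conditional form appearing in the lemma. First I would use the non-signalling conditions \eqref{eq:NS2} and \eqref{eq:NS3} to eliminate the dependence on the remote setting: they imply that $P_{Z|abc}$ is independent of both $a$ and $b$ (write it $P_{Z|c}$), and that the pair marginals $R_a(x,z):=P_{XZ|ac}(x,z)$ and $S_b(y,z):=P_{YZ|bc}(y,z)$ are well defined without reference to the other party's input. A short calculation then shows that the left-hand side $D(P_{Z|abcx},P_{Z|abc})$ is a function of $R_a$ alone; in particular it does not depend on $b$, and for binary $X$ it equals $P_{X|ac}(1)\,D(P_{Z|ac,0},P_{Z|ac,1})$ when $x=0$, with the symmetric expression (factor $P_{X|ac}(0)$) when $x=1$.

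The heart of the argument is a single local comparison for a jointly chosen pair $(a,b)$. Working inside the genuine joint distribution $P_{XYZ|abc}$ and using that $X$ and $Y$ are binary, I would show $\frac12\sum_{w,z}|R_a(w,z)-S_b(w,z)|\le P(X\neq Y|a,b)$: intuitively, when $X=Y$ holds with high probability the table obtained by binning $Z$ according to $X$ is close to the one obtained by binning according to $Y$. Relabelling the outcome of one table, $w\mapsto 1-w$, gives the companion estimate with $P(X\neq Y|a,b)$ replaced by $P(X=Y|a,b)$.

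Next I would arrange the tables in the cyclic order $R_0,S_1,R_2,S_3,\dots,S_{2N-1}$ prescribed by the adjacent pairs $|a-b|=1$, closing the loop with the wrap-around edge $(0,2N-1)$, which is the \emph{unique} edge carrying an outcome-flip. Summing the local comparisons around the loop with the triangle inequality, and using that flipping both arguments leaves $D$ invariant, produces $D(R_a,\bar R_a)\le I_N(P_{XY|AB})$ for every even $a$, where $\bar R_a$ denotes $R_a$ with its outcome relabelled. The single flip picked up on the wrap-around edge is exactly what turns a round trip starting at $R_a$ into $\bar R_a$; since the loop may be started at any node, the same bound holds uniformly in $a$. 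This is the step that reproduces the chained-Bell inequality and generalises \cite{BKP,ColbeckRenner}.

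The hard part will be the final translation, because the conditional distributions in the lemma carry the marginals $p:=P_{X|ac}(0)$ and $q:=P_{X|ac}(1)$, which may be far from uniform, so that naively dividing by a small $p$ or $q$ threatens to destroy the bound. The resolution I would use is that $D(R_a,\bar R_a)=\sum_z|pP_{Z|ac,0}(z)-qP_{Z|ac,1}(z)|$, whence a reverse-triangle-inequality estimate gives $D(R_a,\bar R_a)\ge 2q\,D(P_{Z|ac,0},P_{Z|ac,1})-|p-q|$, while the marginal of the same signed table satisfies $|p-q|\le D(R_a,\bar R_a)$. Combining these two facts yields $P_{X|ac}(1)\,D(P_{Z|ac,0},P_{Z|ac,1})\le D(R_a,\bar R_a)$, and symmetrically for the factor $p$, so that $D(P_{Z|abcx},P_{Z|abc})\le D(R_a,\bar R_a)\le I_N(P_{XY|AB})$ for all $a$, $b$, $c$ and $x$, as required.
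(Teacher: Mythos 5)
Your proposal is correct and follows essentially the same route as the paper's proof: the same coupling bound $D(P_X,P_Y)\le P(X\neq Y)$, the same chained triangle inequality around the $2N$ settings with the single outcome flip carried by the wrap-around edge $(0,2N-1)$, and the same final translation from ``the joint $(X,Z)$ table is close to its outcome-flipped version'' to the conditional statement of the lemma. The only differences are organizational: the paper first conditions on $z$, proves near-uniformity of $P_{X|abcz}$, and then averages over $z$ with weights $P_{Z|abc}(z)$, whereas you chain the joint tables $R_a,S_b$ directly (your $D(R_a,\bar{R}_a)$ coincides with the paper's $2D(P_{XZ|abc},P_{\bar{X}}\times P_{Z|abc})$), and your reverse-triangle manipulation in the last step replaces the paper's insertion of the intermediate point $P_{X|abc}(x)P_{Z|abcx}(z)$ --- both routes deliver the same constant $I_N$.
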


The proof is a generalization of an argument given
in~[15]%\cite{ColbeckRenner}
, which develops results of~[20] %\cite{BHK} 
and~[24].%\cite{BKP}.

\begin{proof}
We first consider the quantity $I_N$ evaluated for the conditional
distribution $P_{X Y | A B, c z} = P_{X Y | A B C Z}(\cdot,
\cdot|\cdot, \cdot, c, z)$, for any fixed $c$ and $z$. The idea is
to use this quantity to bound the trace distance between the
conditional distribution $P_{X|a c z}$ and its negation, $ 1-P_{X|a
c z}$, which corresponds to the distribution of $X$ if its values
are interchanged.  If this distance is small, it follows that the
distribution $P_{X|a c z}$ is roughly uniform.

Let $P_{\bar{X}}$ be the uniform distribution on $X$. For $a_0 := 0$,
$b_0:= 2N-1$, we have
\begin{align}
I_N(P_{XY|AB,cz})
&=
P(X=Y|A=a_0,B=b_0,C=c,Z=z)+\sum_{\genfrac{}{}{0pt}{}{a,b}{|a-b|=1}}P(X\neq Y|A=a,B=b,C=c,Z=z) \nonumber   \\
&\geq D(1-P_{X|a_0 b_0 c z},P_{Y|a_0 b_0 c z}) +
\sum_{\genfrac{}{}{0pt}{}{a,b}{|a-b|=1}}D(P_{X|abcz},P_{Y|abcz})
\nonumber\\
&= D(1-P_{X|a_0cz},P_{Y|b_0cz}) +
\sum_{\genfrac{}{}{0pt}{}{a,b}{|a-b|=1}}D(P_{X|acz},P_{Y|bcz})
\nonumber\\
&\geq D(1-P_{X|a_0cz},P_{X|a_0cz})\nonumber \\
&= 2D(P_{X|a_0b_0cz}, P_{\bar{X}}) \label{ItoD} \ .
\end{align}
The first inequality follows from the fact that $D(P_{X|\Omega},
P_{Y|\Omega}) \leq P(X \neq Y | \Omega)$ for any event $\Omega$ (see
Lemma~\ref{lem:Dbound} below).  Furthermore, we have used the
non-signalling conditions $P_{X|abcz}=P_{X|acz}$ (from~\eqref{eq:NS2})
and $P_{Y|abcz}=P_{Y|bcz}$ (from~\eqref{eq:NS3}), and the triangle
inequality for $D$.  By symmetry, this relation holds for all $a$ and
$b$.  We hence obtain $D(P_{X|abcz},P_{\bar{X}})\leq\frac{1}{2}
I_N(P_{XY|AB,cz})$ for all $a$, $b$, $c$ and $z$.

We now take the average over $z$ on both sides of~\eqref{ItoD}. The
left-hand-side gives
\begin{align}
\sum_{z}P_{Z|abc}(z)I_N(P_{XY|AB,cz})&=\sum_{z}P_{Z|c}(z)I_N(P_{XY|AB,cz})\nonumber\\
&=\sum_{z}P_{Z|a_0b_0c}(z)P(X=Y|a_0, b_0, c,
z)+\sum_{\genfrac{}{}{0pt}{}{a,b}{|a-b|=1}}\sum_{z}P_{Z|abc}(z)P(X\neq
Y|a,b,c,z)\nonumber\\
&=P(X=Y|a_0,b_0,c)+\sum_{\genfrac{}{}{0pt}{}{a,b}{|a-b|=1}}P(X\neq
Y|a, b, c)\nonumber\\
&=I_N(P_{XY|AB,c}) \ ,
\end{align}
where we used the non-signalling condition $P_{Z|abc}=P_{Z|c}$ (which
is implied by~\eqref{eq:NS2} and~\eqref{eq:NS3}) several
times. Furthermore, taking the average on the right-hand-side
of~\eqref{ItoD} yields
$\sum_zP_{Z|abc}(z)D(P_{X|abcz},P_{\bar{X}})=D(P_{XZ|abc},P_{\bar{X}}\times
P_{Z|abc})$, so we have
\begin{equation}\label{eq:1}
2D(P_{XZ|abc},P_{\bar{X}}\times
P_{Z|abc})\leq I_N(P_{XY|AB,c})=I_N(P_{XY|AB}),
\end{equation}
where the last equality follows from the non-signalling
condition~\eqref{eq:NS1}.

%(if $P(X=Y|a,b,c)$ or $P(X\neq Y|a,b,c)$ depended on $c$, then there
%would be signalling from $C$ to $A$ and $B$).

Inequality~\eqref{eq:1} and the relation $D(P_X,Q_X)\leq
D(P_{XY},Q_{XY})$ imply
$D(P_{X|abc},P_{\bar{X}})\leq\frac{1}{2}I_N(P_{XY|AB})$, and hence
\begin{align}\label{eq:2}
\bigl|P_{X|abc}(x)-\frac{1}{2}\bigr|\leq\frac{1}{2}I_N(P_{XY|AB})
\end{align}
for all $a$, $b$, $c$ and $x$.
Furthermore, since
\begin{align*}
  2D(P_{XZ|abc},P_{\bar{X}}\times P_{Z|abc})=\sum_z\bigl|P_{XZ|abc}(0,z)-\frac{1}{2} P_{Z|abc}(z)\bigr|+\sum_z\bigl|P_{XZ|abc}(1,z)-\frac{1}{2} P_{Z|abc}(z)\bigr|,
\end{align*}
and both terms on the right-hand-side are equal, using~\eqref{eq:1} we have
\begin{align*}
  \sum_z \bigl|P_{XZ|abc}(x,z) - \frac{1}{2}P_{Z|abc}(z)\bigr|
  \leq\frac{1}{2}I_N(P_{XY|AB}),
\end{align*}
for all $a$, $b$, $c$ and $x$.  Combining this with~\eqref{eq:2} gives
\begin{align*}
  D(P_{Z|abcx},P_{Z|abc})
& =
  \sum_z \bigl|\frac{1}{2} P_{Z|abcx}(z) - \frac{1}{2} P_{Z|abc}(z) \bigr| \\
& \leq
  \sum_z \bigl|\frac{1}{2} P_{Z|abcx}(z) - P_{X|abc}(x) P_{Z|abcx}(z) \bigr|
  + \sum_z \bigl| P_{X|abc}(x) P_{Z|abcx}(z) - \frac{1}{2} P_{Z|abc}(z) \bigr| \\
& =
  \sum_z P_{Z|abcx}(z) \bigl| \frac{1}{2} - P_{X|abc}(x) \bigr|
   + \sum_{z} \bigl| P_{XZ|abc}(x,z) - \frac{1}{2} P_{Z|abc}(z) \bigr| \\
& \leq I_N(P_{X Y | A B}) \ .
\end{align*}
This establishes the relation~\eqref{eq:markov}.
\end{proof}

\begin{lemma} \label{lem:Dbound} Let $X$ and $Y$ be random variables
  jointly distributed according to $P_{X Y}$. The variational distance
  between the marginal distributions $P_X$ and $P_Y$ is bounded by
  \begin{align*}
    D(P_X, P_Y) \leq P(X \neq Y) \ .
  \end{align*}
\end{lemma}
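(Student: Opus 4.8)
The plan is to prove this directly from the definition of variational distance by expressing both marginals in terms of the joint distribution $P_{XY}$ and exploiting a cancellation of the diagonal terms. Writing $D(P_X,P_Y)=\frac{1}{2}\sum_k|P_X(k)-P_Y(k)|$ and substituting $P_X(k)=\sum_j P_{XY}(k,j)$ together with $P_Y(k)=\sum_i P_{XY}(i,k)$, I would first observe that the term $P_{XY}(k,k)$ occurs in both marginals and therefore cancels in the difference $P_X(k)-P_Y(k)$. Hence this difference only involves off-diagonal entries of $P_{XY}$, i.e.\ those for which $X\neq Y$.

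The key step is then to bound, via the triangle inequality, $|P_X(k)-P_Y(k)|\leq\sum_{j\neq k}P_{XY}(k,j)+\sum_{i\neq k}P_{XY}(i,k)$, and to sum over $k$. Each of the two resulting double sums ranges over all ordered pairs of unequal values, so each equals $\sum_{i\neq j}P_{XY}(i,j)=P(X\neq Y)$. This gives $\sum_k|P_X(k)-P_Y(k)|\leq 2P(X\neq Y)$, and dividing by two yields the claim.

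An alternative route I would consider is a coupling-style argument using the characterization $D(P_X,P_Y)=\sup_S|P(X\in S)-P(Y\in S)|$: for any set $S$ one has $P(X\in S)-P(Y\in S)=P(X\in S,Y\notin S)-P(X\notin S,Y\in S)\leq P(X\neq Y)$, and symmetrically with the roles reversed, so the supremum is bounded by $P(X\neq Y)$. This is cleaner but requires first establishing (or citing) the variational characterization, whereas the direct computation above is entirely self-contained and is the version I would ultimately write up.

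There is no genuine obstacle here: the statement is an elementary and standard inequality, essentially the fact that any coupling of two distributions upper-bounds their total variation distance. The only point requiring a little care is correctly accounting for the diagonal cancellation and then verifying that both off-diagonal double sums recombine into the single quantity $P(X\neq Y)$; once that bookkeeping is done the result is immediate.
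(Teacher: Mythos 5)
Your direct computation is correct: after substituting the marginals of $P_{XY}$ into $\tfrac{1}{2}\sum_k|P_X(k)-P_Y(k)|$, the diagonal term $P_{XY}(k,k)$ indeed cancels, the triangle inequality bounds each summand by the corresponding row and column off-diagonal masses, and the two double sums each recombine to $P(X\neq Y)$, giving the factor of $2$ that the normalisation absorbs. This is, however, a genuinely different route from the paper's. The paper instead conditions on the event $\{X\neq Y\}$: it writes $P_{XY}=p_{\neq}P_{XY}^{\neq}+(1-p_{\neq})P_{XY}^{=}$ with $p_{\neq}=P(X\neq Y)$, notes that the marginals inherit this convex decomposition by linearity, and then applies convexity of the variational distance together with the two facts $D(P_X^{=},P_Y^{=})=0$ (the marginals coincide on the event $X=Y$) and $D\leq 1$, yielding $D(P_X,P_Y)\leq p_{\neq}$. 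The paper's argument is shorter and makes the underlying mechanism transparent (the distance is entirely carried by the probability mass on which the variables disagree), at the cost of invoking convexity of $D$ and, strictly speaking, needing a word about the degenerate cases $p_{\neq}\in\{0,1\}$ where one of the conditional distributions is undefined; your index-level computation avoids both of these and is entirely self-contained, at the cost of some bookkeeping. Your second, coupling-style sketch via $D(P_X,P_Y)=\sup_S|P(X\in S)-P(Y\in S)|$ is also valid and is the standard textbook formulation of the same fact, but as you note it requires first establishing that variational characterisation, so your choice to write up the direct computation is sensible.
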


\begin{proof}
  Let $P_{X Y}^{\neq} := P_{X Y | X\neq Y}$ be the joint distribution of
  $X$ and $Y$ conditioned on the event that they are not
  equal. Similarly, define $P_{X Y}^= := P_{X Y | X = Y}$. We then
  have
  \begin{align*}
    P_{X Y} = p_{\neq} P_{X Y}^{\neq} + (1-p_{\neq}) P_{X Y}^=
  \end{align*}
  where $p_{\neq} := P(X \neq Y)$.  By linearity, the marginals of these
  distributions satisfy the same relation, i.e.,
$$P_{X} = p_{\neq} P_{X}^{\neq} + (1-p_{\neq}) P_{X}^= \ \ \ \
\text{and}\ \ \ \ 
    P_{Y}   = p_{\neq} P_{Y}^{\neq} + (1-p_{\neq}) P_{Y}^=\, .$$
  Hence,  by the convexity of the variational distance,
  \begin{align*}
    D(P_X, P_Y) 
  \leq p_{\neq} D(P_X^{\neq}, P_Y^{\neq}) + (1-p_{\neq})  D(P_X^=,
  P_Y^=) 
  \leq p_{\neq} \ ,
  \end{align*}  
  where the last inequality follows because the variational distance
  cannot be larger than one, and $D(P_X^=, P_Y^=) = 0$.
\end{proof}

\section*{PART~III OF THE PROOF}%\label{app:2}

In this section we give the proof of the final part of
Theorem~1.\footnote{The proof we give here is similar to an argument
  given by Zurek~\cite{zurek} to derive the Born rule starting from
  unitarity.} We use the setup and assumptions as formulated at the
beginning of the Supplementary Methods.  In Parts~I and~II of the
proof (see the main text and the previous section) we showed that for
all $a$, $b$, $c$ and $x$, the relation $P_{Z|acx} = P_{Z|ac}$ holds
for projective quantum measurements compatible with one half of a
maximally entangled state (cf.\ Lemma~\ref{lem:main} and recall that
for such measurements, the quantity $I_N$ can be made arbitrarily
small for sufficiently large $N$).  Part~III, explained here, extends
this claim to arbitrary states (not necessarily maximally entangled
ones) and arbitrary measurements.

The argument proceeds in two steps. The first is to reduce the problem
to a situation where the measurement outcome is essentially
uniform. Let $(A \bef X,\, \{E_x^{\bar{a}}\}_{x},\, \cH_S)$ be the quantum
measurement under consideration (where the input $A=\bar{a}$ is
fixed). The idea is that we can always append a second measurement,
generating $\bar{X}$, such that the distribution of the joint output
$(X, \bar{X})$ is flat (to any desired accuracy).

\begin{lemma}
  Let $\varepsilon > 0$ and let $\rho_S$ be an arbitrary density
  operator on $\cH_S$.  For any measurement on $S$ there exists an
  additional measurement such that the joint output distribution of
  $(X,\bar{X})$, obtained by applying the two measurements
  sequentially to $\rho_S$, has distance $\varepsilon$ to a flat
  distribution.
\end{lemma}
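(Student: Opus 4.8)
The plan is to make the joint outcome flat by \emph{refining} the given outcome $X$: conditioned on each value $X=x$, I would append a second measurement that outputs a uniformly distributed label $\bar X$ over a range whose size is proportional to the probability $p_x$ of that value. If the range sizes exactly matched the probabilities the joint distribution of $(X,\bar X)$ would be perfectly flat; since integer range sizes can match the $p_x$ only approximately, the resulting error is controlled by the fineness of the partition and driven below $\varepsilon$.

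First I would record the distribution of $X$, writing $p_x := \tr\bigl((E_x^{\bar{a}})^\dagger E_x^{\bar{a}}\rho_S\bigr)$, where $x$ ranges over the (without loss of generality finite) set of outcomes. Next I would fix a large integer $M$ and choose nonnegative integers $m_x$, with $m_x\ge 1$ whenever $p_x>0$ and $\sum_x m_x = M$, such that $\sum_x \bigl|p_x - \tfrac{m_x}{M}\bigr| \le 2\varepsilon$. Such a choice exists by elementary rational approximation of the probability vector $(p_x)_x$: taking $m_x$ to be $M p_x$ rounded appropriately gives $\bigl|p_x - \tfrac{m_x}{M}\bigr| = O(1/M)$ termwise, so the sum falls below $2\varepsilon$ once $M$ exceeds a threshold set by $\varepsilon$, the number of outcomes, and $\min_{x:\,p_x>0} p_x$ (the last ensuring $m_x\ge 1$).

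Then I would build the additional measurement. Conditioned on the first outcome $x$, the second measurement has $m_x$ equiprobable outcomes $j\in\{1,\ldots,m_x\}$ and is realized by the state-independent operators $F^{(x)}_j := \tfrac{1}{\sqrt{m_x}}\,V^{(x)}_j$ for arbitrary unitaries $V^{(x)}_j$; since $\sum_j (F^{(x)}_j)^\dagger F^{(x)}_j = \openone$, this is a legitimate quantum measurement in the sense of Definition~\ref{def:measurement}. The sequential composition of the two measurements is a single measurement with operators $F^{(x)}_j E_x^{\bar{a}}$ indexed by the pair $(x,j)$, as $\sum_{x,j} (F^{(x)}_j E_x^{\bar{a}})^\dagger F^{(x)}_j E_x^{\bar{a}} = \sum_x (E_x^{\bar{a}})^\dagger E_x^{\bar{a}} = \openone$; so $(X,\bar X)$ is indeed obtained by applying one further measurement after the first.

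Finally I would evaluate the joint distribution: the pair $(x,j)$ with $1\le j\le m_x$ occurs with probability $\tr\bigl((E_x^{\bar{a}})^\dagger (F^{(x)}_j)^\dagger F^{(x)}_j E_x^{\bar{a}}\rho_S\bigr) = \tfrac{1}{m_x}p_x$, so the support consists of exactly $M$ pairs. Comparing with the flat distribution $1/M$ on this support, the variational distance is $\tfrac{1}{2}\sum_x m_x\bigl|\tfrac{p_x}{m_x}-\tfrac{1}{M}\bigr| = \tfrac{1}{2}\sum_x\bigl|p_x-\tfrac{m_x}{M}\bigr| \le \varepsilon$. The one delicate point, and the only real obstacle, is the integer step: one must simultaneously enforce $\sum_x m_x = M$ and $m_x\ge 1$ on realized outcomes while keeping $\sum_x|p_x - m_x/M|$ small; for a finite outcome set this is routine, and if a power-of-two support is wanted for the subsequent reduction to a maximally entangled state of dimension $2^t$, one may additionally take $M$ a power of two by using dyadic approximations.
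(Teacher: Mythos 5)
Your proposal is correct and is essentially a complete, explicit instantiation of the paper's own argument, which is only given as a proof idea (split each probability $p_x$ into $m_x\approx Mp_x$ equiprobable sub-events realized by a further measurement, with the rounding error controlling the distance to flatness). The details you supply -- the validity of the composed measurement operators $F^{(x)}_j E_x^{\bar a}$, the computation $D=\tfrac12\sum_x|p_x-m_x/M|$, and the option of taking $M$ a power of two for the later reduction to qubit maximally entangled states -- all check out.
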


\begin{proof}[Proof idea]
  It is easy to see that any probability distribution can be turned
  into an approximately flat one by adding an additional random
  process that ``splits'' each probability into sufficiently many
  smaller events. Furthermore, any such random process can be obtained
  by an appropriate choice of projective measurement (in a
  sufficiently large Hilbert space).  
\end{proof}

Let $\{E_{x,\bar{x}}^{\bar{a}}\}_{x,\bar{x}}$ be the set of measurement
operators corresponding to the measurement $(A \bef (X,\bar{X}),\,
\{E_{x,\bar{x}}^{\bar{a}}\}_{x,\bar{x}},\, \cH_S)$ which generates the pair $(X,
\bar{X})$, and let $\rho_S$ be a pure quantum state compatible with
this measurement (see Assumption~\emph{\aQa{}}).  Next, we introduce
projectors $\{\Pi_{x,\bar{x}}\}_{x,\bar{x}}$ and an isometry $\bcE$
such that $\sigma_{SDR}=\bcE(\rho_S)$ satisfies
\[
\tr_{DR}((\openone_S\ot\Pi_{x,\bar{x}}\ot\openone_R)\sigma_{SDR}(\openone_S\ot\Pi_{x,\bar{x}}\ot\openone_R))=E_{x,\bar{x}}^{\bar{a}}\rho_S(E_{x,\bar{x}}^{\bar{a}})^\dagger
\ .
\]
(Note that the isometry can always be defined such that the projectors
$\Pi_{x,\bar{x}}$ have rank one.) According to Assumption~\emph{\aQb{}}
we can append additional quantum measurements $(A'\bef
(X',\bar{X}'),\,\{F^a_{x,\bar{x}}\}_{a,x,\bar{x}},\,\cH_D)$ (with
$F^{\bar{a}}_{x,\bar{x}}=\Pi_{x,\bar{x}}$) and $(B\bef
Y,\,\{G^b_{y}\}_{b,y},\,\cH_S\ot\cH_R)$, such that the output
statistics are compatible with $\sigma_{S D R}$.  Furthermore,
$(X',\bar{X}')=(X,\bar{X})$ whenever $A'=\bar{a}$. Finally, by
Assumption~\emph{\aF} we can take $A'$ and $B$ to be free choices with
$O_A\bef (X',\bar{X}')$, $O_B\bef Y$, and $O_C\bef Z$ spacelike
separated (where $O_A$, $O_B$, and $O_C$ are the trigger events for
$A'$, $B$, and $C$, respectively).

Since the outcomes $(X',\bar{X}')$ of the measurement (for
$A'=\bar{a}$) are almost (up to an arbitrarily small distance
$\varepsilon$) uniformly distributed, and the state $\sigma_{SDR}$ is
pure, it must be (almost) maximally entangled between the measurement
device, $\cH_D$ and the remaining systems, $\cH_S\ot\cH_R$ (by a
suitable choice of the additional measurement, we can always take this
to be maximally entangled over an integer number of two-level
systems).  Furthermore, $\{\Pi_{x,\bar{x}}\}_{x,\bar{x}}$ are
orthogonal projectors. Hence, by a suitable choice of the additional
measurements producing $(X', \bar{X}')$ and $Y$, the argument given in
Parts~I and~II of the proof implies that, for any $\varepsilon>0$ and
for all $c$, $x$ and $\bar{x}$,
\begin{align*}
D(P_{Z|A'=\bar{a},cx \bar{x}},P_{Z|A'=\bar{a},c}) \leq \varepsilon \ .
\end{align*}
Since the values of $(X, \bar{X})$ and $(X', \bar{X}')$ coincide for
$A' = A = \bar{a}$ (cf.\ Assumption~\emph{\aQb}), we have
\begin{align*}
D(P_{Z|A=\bar{a},cx},P_{Z|A=\bar{a},c}) \leq \varepsilon \ .
\end{align*}
This relation holds for all $\bar{a}$, and, since $\varepsilon$ can be
arbitrarily small, establishes the desired Markov chain condition
$P_{Z|A=\bar{a},cx} =P_{Z|A=\bar{a},c}$.

\section*{REMARKS ON THE NOTION OF LOCALITY}

Here we make some comments about the notion of locality.  The main
point is to highlight that Bell's notion of locality is similar to,
but slightly less general than, the non-signalling nature of the
extension (as derived in Part~I of the proof).

To quote Bell~[2]%\cite{Bell}
, \emph{locality} is the requirement that
``...the result of a measurement on one system [is] unaffected by
operations on a distant system with which it has interacted in the
past...''  Indeed, our non-signalling conditions reflect this
requirement and, in our language, the statement that $P_{XYZ|ABC}$ is
non-signalling is equivalent to a statement that the model is
\emph{local} (see also the discussion in~\cite{Hall}).  (We remind the
reader that we do not assume the non-signalling conditions, but
instead derive them from the free choice assumption.)

In spite of the above quote, Bell's formal definition of locality is
slightly more restrictive than these non-signalling conditions.  Bell
considers extending the theory using hidden variables, here denoted by
the variable $Z$.  He requires $P_{XY|ABZ}=P_{X|AZ}\times P_{Y|BZ}$
(see e.g.~[13]%\cite{Bell_nouvelle}
), which corresponds to assuming not only $P_{X|ABZ}=P_{X|AZ}$ and
$P_{Y|ABZ}=P_{Y|BZ}$ (the non-signalling constraints, also called
\emph{parameter-independence} in this context), but also
$P_{X|ABYZ}=P_{X|ABZ}$ and $P_{Y|ABXZ}=P_{Y|ABZ}$ (also called
\emph{outcome-independence}).  These additional constraints do not
follow from our assumptions and are not used in this work.

A possible reason for the discrepancy is that Bell principally
considered extended theories which are deterministic given the hidden
variables.  In this case, the distinction between Bell's notion of
locality and the non-signalling conditions we use is unimportant: if
$X$ is deterministic given $A$ and $Z$, then $P_{X|ABYZ}=P_{X|AZ}$
follows automatically.  In fact, the converse also holds: given
parameter-independence and outcome-independence a necessary condition
for the model to recreate the quantum correlations arising from
measurements on a maximally entangled state is that it is
deterministic given the hidden variables.  To see this, note that for
any measurement $A=a$, there is a corresponding measurement $B=b_a$
such that quantum theory predicts identical outcomes.
%consider
%the case where quantum theory predicts identical outcomes for both
%parties (e.g.\ when spacelike separated measurements in the
%appropriate bases are made on quantum particles in the state
%$\frac{1}{\sqrt{2}}(\ket{00}+\ket{11})$).  
In other words, $P_{X|ab_ayz}=\delta_{x,y}$.
%where, for each $a$, we use $b_a$ to denote the setting on Bob's side
%for which quantum theory predicts identical outcomes.
The assumptions of parameter-independence and outcome-independence
give $P_{X|az}=P_{X|ab_ayz}$, and so $P_{X|az}(x)=\delta_{x,y}$.  This
implies that $X$ and $Y$ are determined given $A$ and $Z$.

% Furthermore, it is worth noting that for models where
% $P_{XY|ABZ}=P_{X|AZ}\times P_{Y|BZ}$, and where $X$ is not determined
% by $A$ and $Z$, it is always possible to add local hidden variables to
% make the model deterministic.  Hence, the behaviour of probabilistic
% models of this kind is essentially the same as in the deterministic
% case.

\section*{CANDIDATE EXTENSIONS BASED ON SIMULATIONS OF QUANTUM CORRELATIONS}

It has been shown in a number of ways that quantum correlations can be
simulated from other resources.  For example, all correlations
generated by projective measurements on a maximally entangled pair of
qubits can be simulated by shared randomness and one bit of classical
communication~\cite{TonerBacon}, or by shared randomness and a
non-local box~\cite{CGMP} (a hypothetical device with
stronger-than-quantum correlations~\cite{Cirelson93,PR}). Furthermore,
these results have been generalized to arbitrary (not necessarily
maximally entangled) pure states~\cite{BGPS08}.  

Since such simulations recreate quantum correlations, they may appear
at first sight to be extensions of quantum theory.  We will not
provide an exhaustive treatment of all such models, but instead give a
short explanation as to why the examples above do not contradict our
claim.

First note that the ability to simulate quantum correlations does not
imply the ability to predict the outcomes of a genuine quantum
experiment.  However, when thinking about these simulations in the
context of extending quantum theory, the hypothesis is that the
components of the simulation really exist and are used to generate
outcomes.

The case where communication is needed is analogous to de Broglie-Bohm
theory~[16, 17] %\cite{deBroglie,Bohm}
(discussed in the main text).  In order that the simulation can work
in the case of spacelike separated measurements, the communication
bit, $Z$ (which depends on one of the measurement choices, say $A$),
must propagate faster than light.  The bit $Z$ is therefore accessible
outside the future lightcone of $A$.  According to
Assumption~\emph{\aF}, it must be possible to choose $A$ to be
independent of this (now pre-existing) information, which would no
longer be the case.  Such models therefore contradict
Assumption~\emph{\aF}.

In the model of~\cite{CGMP}, where a non-local box is used for the
simulation, even with full access to this box, there is no better way
to predict the measurement outcomes.  To see this, note that the
output, $X$, of a measurement specified by a parameter, $A$, is
generated in the simulation by {\sc xor}ing a shared classical value
with the output of a non-local box, whose input depends on $A$.  Since
the individual outputs of a non-local box are uniform and random the
same is true for $X$.  Hence, while the simulation recreates the
correct quantum correlations, it does not extend quantum theory in the
sense of providing any extra information about future measurement
outcomes.  It is hence in agreement with Part~II of the proof

However, because it recreates the quantum correlations, the simulation
provides more information about the outcomes of joint measurements.
To see that this is incompatible with quantum theory, one would need
to apply Part~III of our argument, using a description of how the
model evolves under reversible operations.  Such a description is not
given in the above model and, furthermore, in consistent theories
which permit non-local boxes~\cite{barrett} the reversible dynamics
are known to be trivial~\cite{GMCD}.  They cannot therefore result in
a state whose statistics are consistent with those from a quantum
evolution, and hence contradict Assumption~\emph{\aQb}.

%\setcounter{NAT@ctr}{25}
%\bibliographystyle{naturemag}
%\bibliography{../../../bibtex}

%\end{document}

\end{widetext}

\end{document}